\newtheorem{definition}{Definition}
\newtheorem{theorem}{Theorem}
\newtheorem{lemma}{Lemma}
\newtheorem{corollary}{Corollary}
\newtheorem{notation}{Notation}
\newtheorem{example}{Example}
\newcommand{\so}{\rightarrow}
\newcommand{\mso}{\Rightarrow}
\newcommand{\miff}{\Longleftrightarrow}
\newcommand{\pair}[1]{\langle #1 \rangle}
\newcommand{\Fcal}{\mathcal{F}}
\newcommand{\abr}{\allowbreak} 
\title{Characterization of Lattice Properties\\ Within Modal Extensions}
\author{Alfredo R. Freire
\institute{Department of Philosophy\\University of Brasilia\\ Brasilia, Brasil}
\email{alfredo.filho@unb.br}
\and
  Manuel A. Martins
\institute{Department of Mathematics\\ University of Aveiro\\ Aveiro, Portugal}
\email{\quad martins@ua.pt}
}
\begin{document}

\maketitle

\begin{abstract}

This paper investigates the extension of lattice-based logics into modal languages. We observe that such extensions admit multiple approaches, as the interpretation of the necessity operator is not uniquely determined by the underlying lattice structure. The most natural interpretation defines necessity as the meet of the truth values of a formula across all accessible worlds---an approach we refer to as the \textit{normal interpretation}. We examine the logical properties that emerge under this and other interpretations, including the conditions under which the resulting modal logic satisfies the axiom $K$ and other common modal validities. Furthermore, we consider cases in which necessity is attributed exclusively to formulas that hold in all accessible worlds. 

\end{abstract}

\section{Introduction}

Introduced by Kripke \cite{Kripke1959-KRIACT}, the classical modal semantics consists of a set of worlds, an accessibility relation between worlds, and valuations of propositional variables over the two-valued Boolean algebra $B_2$.
However, instead of using $B_2$ as the basis for valuations and non-modal connectives, one may start from an alternative lattice semantics such as the \L ukasiewicz algebra $[0, 1]$, the four-valued Belnap lattice or the Boolean algebras of $n$ with $n > 2$. In this paper, we study modal extensions of lattice semantics, how some modal sentences can impose constraints on the lattices.

Lattices can be used as semantics for an important family of logics including classical, intuitionistic, fuzzy, relevant, and paraconsistent systems. They interpret disjunction and conjunction in a familiar way and structure logical values in a partially ordered set. 
We shall briefly discuss possibilities for building modal extensions from a given lattice semantics. However, as the ordered structure of lattices suggests natural choices for interpreting the necessity operator -- namely, the greatest lower bound of values in the accessed worlds -- we adopt this interpretation for most of our results. 

Our interest in this kind of modal logic is rooted in the problem of meaningfully interpreting, within a given logic, the truth values of another logic. This is fundamentally important if we want to address the \textbf{counterlogical} statements.\footnote{The issue of counterlogical statements has been treated in recent literature by Nolan in \cite{
nolanImpossible}, Berto and Jago in the recent book \cite{berto2019impossible} and Kocurek and Jerzak in \cite{kocurek2021counterlogicals}. The general approach is to consider, in opposition to \textbf{normal} worlds, the \textbf{impossible} worlds where formulas do not always receive values by algebraic operations from the values of propositional variables. This, we believe, makes the study of logical possibilities very restrictive. 
For example, one cannot access an intuitionistic world from the point of view of a classical world, and one must choose a logic that should be considered normal.} Moreover, this issue has appeared in applications to computer systems where states operate using different algebraic properties (see \cite{MMB18, BMC14, DS07}). In this context, a lattice $L$ containing lattice semantics $L_1$ and $L_2$ as sublattices provides us with a common order in which logics based on $L_1$ and $L_2$ can meaningfully communicate. This framework was introduced in \cite{RM24}, and a general mathematical treatment of these structures is worth developing. Here, we limit our work to structures where every world operates in a single lattice semantics. This is a fundamental and required development if we want to properly generalize the treatment of many logics across many worlds proposed in \cite{RM24} (see also \cite{FMM24} for an institutional formulation of Many Logics Modal Logic). 

After a short section on lattice semantics background and a section on modal extension of lattice semantics, we study the relation between general modal validities and lattice properties. We will first observe that factorization of $\Box$ over disjunction is fundamentally related to the order relations in the set of designated values. In addition, we will observe some general conditions connected to the validity of axiom K. The paper concludes with some lines for future research.

\section{Background on lattice semantics}

Before going into the content of our paper, let us briefly introduce some definitions and basic concepts that we will use throughout the paper. 

A \textbf{partially ordered set} is an ordered pair $(P,\leq)$ in which $P$ is a set and $\leq$ is a binary relation on $P$ reflexive, anti-symmetric and transitive. Two elements $a,b\in P$ are said to be \textbf{ comparable to $\leq$} if $a\leq b$ or $b\leq a$. The order $(P,\leq)$ will be called \textbf{linearly ordered} if any two elements can be compared using $\leq$. 

A {\em lattice} is a partially ordered nonempty set $(L,\leq)$ in which each pair of elements $a$, $b$ has a join (denoted by $a+b$) and a meet (denoted by $a.b$). In some contexts, it is useful to consider a unary operation on $L$ called \textbf{complementation} and denoted by the symbol $-$\footnote{The name complementation suggests that it will be used in the interpretation of negation and suggests that it will have the Boolean properties $-a + a = 1$ and $-a.a = 0$. Although we intend to use complementation to interpret negation, we shall not commit ourselves upfront to any property for the operation of complementation.}. 
A lattice $L$ with complementation is \textbf{anti-monotone} when for every $a,b \in L$, if $a \leq b$, then $-b \leq -a$.
Let $\bigwedge S$ denote (if it exists) the greatest lower bound of the values in a set $S$, we say that the lattice $L$ has \textbf{down-distribution} when, for all $A \subseteq L$ and $B \subseteq L$, we have $\bigwedge (A + B) = \bigwedge A + \bigwedge B$. In general, we will use $a \supset b$ to refer to $-a + b$, although some alternatives will also be considered.

In this paper, a \textbf{set of designated elements} of a partially ordered set $\pair{L,\leq}$ is a subset $A$ of $L$ that is closed upward\footnote{This, of course, diminishes the generality of the set of designated values being considered. However, it is rare to see such sets employed in defining the concept of validity within a logical framework.}, that is, for any $a,b\in L$, $a\leq b$ and $a\in A$ imply $b\in A$. 
For a lattice $L$, we say that $A$ is \textbf{implicative} in $L$ when 
$$a \leq b \mso a \supset b \in A.$$
A set of designated elements of $L$ is called a \textbf{filter} if it is closed under meet operation. A filter is called an \textbf{ultrafilter} if it is maximal with respect to inclusion. 

A \textbf{matrix} is a pair $\pair{L,D}$, where $L$ is a lattice and $D$ is a set of designated values. If $D$ is a filter, $\pair{L,D}$ is called a \textbf{filter-matrix}.

Let $\mathcal{L} = \{Var, \land, \lor, \lnot\}$ be a propositional language and $Var$ be a set  of propositional variables; $Form(\mathcal  L,\abr Var)$, the set of propositional formulas of $\mathcal{L}$, is the free $\mathcal{L} $-algebra over $Var$. A \textbf{valuation} is a function of $Var$ in $L$. It is well known that it can be uniquely extended to a homomorphism, namely the map $\bar v:Form(\mathcal L,Var)\to L$ defined in the following way

	\begin{enumerate}
		\item $\bar v(p)=v(p)$ for every variable $p\in Var$;
		\item $\bar v(\varphi \lor \psi) = \bar v(\varphi) + \bar v ( \psi)$;
		\item $\bar v(\varphi \land \psi) = \bar  v(\varphi) . \bar v ( \psi)$;
		\item $\bar v (\lnot \varphi) = - \bar v(\varphi)$.
		\end{enumerate}
Since $\bar v$ is uniquely determined by $v$ we simply write $v$. A \textbf{substitution} is an automorphism in the formula algebra $Form(\mathcal L,Var)$.
Given a matrix $M=(L,D)$, a valuation $v$ and a formula $\varphi$, we write $v\vDash_M \varphi$ whenever $v(\varphi)\in D$. The \textbf{consequence relation over $M$} is the relation $\vDash_M\subseteq P(Form(\mathcal L,Var))\times Form(\mathcal L,Var)$ defined by 
 \begin{center}
$\Gamma\vDash_M \varphi $ iff for all valuations $v$, $v(\Gamma)\subseteq D$ implies $v(\varphi)\in D$.
\end{center}

When it is clear from the context, we omit explicit references to $M$ in $\vDash_M$. The relation $\vDash_M$ satisfies Tarski's conditions.

\section{Modal extension of lattice based logics}

We restrict our investigation to extensions of logics that have a lattice semantics as suggested in \cite{fitting1991many, priest_2008, odintsov2012bk, RM24}. This restriction will allow us to produce a broader study of the phenomena of modal structures since it produces common meet and join operations shared by most familiar logics.

As we shall see shortly, there are many alternatives for interpreting the validity of $\Box \varphi$ in a world. If we consider a model $M$ where every world operates in a three-valued lattice $\{0, 0.5, 1\}$ with distinguished values $\{0.5, 1\}$, one could say that the value of the formula $\Box \psi$ is $1$ whenever all accessible worlds validate $\psi$ (i.e., $\psi$ has a designated value in all accessible worlds). We could, however, attribute the value $0.5$ or find other more articulated ways of attributing the truth value for the modal formula. We shall call a modal valuation \textit{regular} when it implies that ``necessity means true in all accessible worlds'' (NAW). More precisely, let us define a general modal extension of these logics:

\begin{definition}
Let $\mathcal{L}$ be a propositional language, and $Var$ be the set of propositional variables of $\mathcal{L}$.
For a kripke frame\footnote{In modal logic a \textbf{Kripke frame} $F$ is a pair$ = \pair{W, R}$, where $W$ is a set (of worlds) and $R\subseteq W^2$ (the accessibility relation).} $F = \pair{W, R}$ and a lattice $A$ , we say that $v$ is an \textbf{$A$-valuation} of $F$ in $\mathcal{L}$ when $v$ is a function from $W \times Var$ to $A$.

The function $v_w: Var \longrightarrow A$ is an $A$-valuation such that $v_w(X) = v(w,X)$ for every variable $X \in Var$.
\end{definition}

\begin{definition}
Let $F = \pair{W, R}$ be a Kripke frame and $v$ an $A$-valuation for a lattice $A$, we define $M = \pair{W, R, v}$ to be a \textbf{$A$-model} with frame $F$.
\end{definition}

\begin{definition}
For a lattice $A$ with designated values $D$ and corresponding logic semantics $\vDash_{A, D}$, we say that $\Vdash$ is a  \textbf{modal extension} of $\vDash_{A, D}$ when for every $A$-model $M = \pair{W, R, v}$ and every world $w \in W$ we have 
\begin{enumerate}
	\item[1.] for every formula $\varphi$ without modal operator and $w \in W$,
	$$ w \Vdash \varphi \miff v_w \vDash_{A, D} \varphi;$$
\end{enumerate}
We will say that the modal extension $\Vdash$ is \textbf{regular} when additionally 
\begin{enumerate}
	\item[2.] for every formula $\psi$ and every $w \in W$,
	$$w \Vdash \Box \psi \miff \text{for every } w' \in W \text{ with } w R w', w' \Vdash_{A, D} \psi$$
\end{enumerate}
\end{definition}

\begin{example} Let $\mathbf{B}_2$ be the Boolean algebra with two elements $0$ and $1$. Consider the consequence relation over the matrix $({\mathbf{B}_2, \{1\}})$ (i.e., $\vDash_{B_2, \{1\}}$).

The modal extension of $\vDash_{B_2, \{1\}}$ where 
$$w \Vdash \Box \psi := \bigwedge\{ w' \Vdash \psi|w R w'    \}$$

\noindent coincides with the standard modal logic which is a regular modal extension of $\vDash_{B_2, \{1\}}$.

On the other hand, taken the same $\vDash_{B_2, \{1\}}$, and considering the modal extension with $$w \Vdash \Box \psi := w \Vdash \psi $$ \noindent we obtain a non-regular modal extension.

We also have non-trivial examples of non-regular modal extensions.

Consider the logic $\vDash_{B, U}$, where $B$ is an infinite Boolean algebra and $U$  is a non-principal ultrafilter on $B$.

Define

$$w \Vdash \Box \psi := \bigwedge\{ w' \Vdash \psi|w R w'    \}.$$

Consider a model $M$ with a world $w_1$ that accesses all $w_b$ for each $b \in U$.
If we take $v_{w_a}({p}) = a$,  the value of ${p}$ is designated in each world. i.e. $w_a \Vdash p$. However,  $w_1\Vdash \Box {p}=\bigwedge U \notin U$ as $U$ is non-principal.
Consequently, $w_1 \nvDash \Box {p}$.
\end{example}

Naturally, if $A$ has more than one distinguished value, it is easy to show that there are more than one regular modal extension. Moreover, whenever the lattice is complete, 
the most natural modal extension assigns to $\Box \varphi$ in a world $w$ the conjunction of the values of $\varphi$ in the accessible worlds. Thus, we define:

\begin{definition}\label{normal-modal-valuation}
Let $\mathcal{L} = \{Var, \land, \lor, \lnot\}$ be a propositional language, and $Form$ be the set of propositional formulas of $\mathcal{L}$.
For a  \textbf{complete} lattice $A = \pair{D, ., +, -}$ where $.$ is the usual meet operation, $+$ the usual join operation, and $-$ is any unary operation over $D$, consider the frame $F = \pair{W, R}$. We say that $v: W \times Form \longrightarrow A$ is a  \textbf{normal modal valuation} when for every $w \in W$
\begin{enumerate}
	\item $v(w, X) \in A$ for every variable $X$ of $\mathcal{L}$.
	\item $v(w, \varphi \lor \psi) = v(w, \varphi) + v(w, \psi)$.
	\item $v(w, \varphi \land \psi) = v(w, \varphi) . v(w, \psi)$.
	\item $v(w, \lnot \varphi) = - v(w, \varphi)$.
	\item $v(w, \Box \varphi) = \bigwedge \{v(w', \varphi) \mid w R w'\}$.
\end{enumerate}
We generally use the notation $v_w(\varphi)$ for $v(w, \varphi)$.
\end{definition}

Note that unless the lattice is complete, the definition of modal valuations will be partial. 
In cases where the base lattice is incomplete, there would be modal formulas that cannot have a truth value. This is because there will be a set of values for which the meet operation does not have a determinate value.
We will return briefly to this topic later in this paper.

\begin{definition}
    Let $A$ be a lattice with corresponding language $\mathcal{L}$, $F \subset A$ be a set of designated values, and $M = \pair{W, R, v}$ be any $A$-model where $v$ is a normal modal valuation. For $w \in W$ and $\varphi$ in $\mathcal{L}$, we say that $w \Vdash_{A, F} \varphi$ if, and only if, $v_w(\varphi) \in F$.
    We call $\Vdash_{A, F}$ the \textbf{normal modal extension} of the matrix $H = \pair{A, F}$.
\end{definition}

If there are other operations in the lattice (e.g., for implication or for a different kind of negation), a normal modal valuation applies the Definition~\ref{normal-modal-valuation} for the operations $\land, \lor, \lnot$ and the corresponding operation for the additional operation. In this case, if $P: D^n \to D$ represents an operation in the lattice for the symbol $p$, the normal modal valuation will be such that $v(w, p(\varphi_1, \varphi_2, \ldots, \varphi_n)) = P(v(w, \varphi_1), v(w, \varphi_2), \ldots, v(w, \varphi_n))$.

\begin{theorem}\label{filter-and-regular}
For a complete lattice $A$ and a set $F$ of designate values in $A$, the $A$-normal modal valuation produces a unique regular modal extension of $\pair{A, F}$ if, and only if, $\bigwedge F \in F$ and $F$ is a filter.
\end{theorem}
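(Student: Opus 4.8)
The plan is to prove both implications, the substantive content being regularity of the normal modal extension $\Vdash_{A,F}$; the normal modal valuation of each model is uniquely determined by clauses 1--5 of Definition~\ref{normal-modal-valuation} (completeness of $A$ making it total), so there is exactly one extension to examine, and the question is precisely when it is regular.

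For the backward direction, assume $\bigwedge F \in F$ and $F$ is a filter. Fix an $A$-model $M = \pair{W, R, v}$ with $v$ the normal modal valuation, a world $w$, and a formula $\psi$, and put $S = \{v_{w'}(\psi) \mid w R w'\}$, so that $v_w(\Box \psi) = \bigwedge S$ by clause 5. I would show $\bigwedge S \in F \miff S \subseteq F$, which is exactly the second clause in the definition of a regular modal extension applied to $\Box\psi$; its first clause holds automatically since the normal modal valuation is homomorphic on $\land,\lor,\lnot$. If $S \subseteq F$ then $\bigwedge F$ is a lower bound of $S$, so $\bigwedge F \leq \bigwedge S$, and as $\bigwedge F \in F$ with $F$ upward closed we get $\bigwedge S \in F$; the degenerate case $S = \emptyset$ (a world with no successors) is covered since $\bigwedge\emptyset$ is the greatest element of $A$, which lies in the nonempty upward-closed $F$. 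Conversely, if some $w'$ with $w R w'$ has $v_{w'}(\psi) \notin F$, then $\bigwedge S \leq v_{w'}(\psi)$, and upward closure of $F$ forces $\bigwedge S \notin F$.

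For the forward direction, assume $\Vdash_{A,F}$ is regular. To get $\bigwedge F \in F$ I would build the model whose worlds are a root $w_0$ together with one world $w_a$ for each $a \in F$, with $w_0 R w_a$ for all $a$ and $v(w_a, p) = a$ for a fixed variable $p$. Then $w_a \Vdash_{A,F} p$ for every $a \in F$, so regularity yields $w_0 \Vdash_{A,F} \Box p$; but $v_{w_0}(\Box p) = \bigwedge\{v_{w_a}(p) \mid a \in F\} = \bigwedge F$, whence $\bigwedge F \in F$. That $F$ is a filter now follows, since it is upward closed by hypothesis and for $a,b \in F$ we have $a.b \geq \bigwedge F \in F$, so $a.b \in F$ (a model in which $w_0$ sees exactly two worlds carrying $a$ and $b$ gives this directly as well).

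I expect the only delicate point to be the passage through an infinitary meet: it is tempting to believe closure of $F$ under finite meets already suffices, but the non-principal ultrafilter example discussed above shows it does not, which is exactly why the hypothesis $\bigwedge F \in F$ (equivalently here, closure of $F$ under arbitrary meets) cannot be weakened and why the counterexample frame in the forward direction must fan out over all of $F$. The empty-successor situation in the backward direction likewise warrants an explicit line, since there $\Box\psi$ is vacuously necessary and one must confirm $\bigwedge\emptyset \in F$.
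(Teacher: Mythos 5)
Your proof is correct and its backbone is the same as the paper's: the backward direction verifies the NAW condition by comparing $\bigwedge S$ with $\bigwedge F$ and using upward closure in both directions, and the forward direction uses the fan-out model over all of $F$ to force $\bigwedge F \in F$. The one genuine difference is that you derive the filter condition algebraically from $\bigwedge F \in F$ together with the standing convention that designated sets are upward closed (since $a.b \geq \bigwedge F$ for $a,b\in F$), so a single counterexample model suffices for the whole forward direction. The paper instead treats ``$F$ is not a filter'' by a separate two-world counterexample, and that passage is the weakest part of its proof: it posits $a \geq b \in F$ with $a \notin F$, which cannot happen for an upward-closed $F$, and the computation $v_w(\Box C)=b$ presupposes $b \leq a$ anyway. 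Your observation that the filter hypothesis is redundant under the paper's conventions both repairs and streamlines that step. You also handle the dead-end world (where $\bigwedge\emptyset$ is the top element) explicitly, which the paper omits; this is a minor but legitimate completion, and your remark that the non-principal ultrafilter example is what blocks any weakening of $\bigwedge F\in F$ to finite-meet closure correctly identifies the load-bearing hypothesis.
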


\begin{proof}
Let us suppose $F$ is a filter in $A$ such that $\bigwedge F \in F$ and consider a $A$-model $M = \pair{W, R, v}$.  

If we assume that there is a world $w \in W$ satisfying $\Box \varphi$, we obtain 
$$\bigwedge \{v_{w'}(\varphi) \mid w R w'\} = f \in F$$
Therefore $v_{w'}(\varphi) \geq f$ for every $w'$ accessible to $w$. Since $F$ is a filter, $v_{w'}(\varphi) \in F$. This in turn means that each $w'$ accessible to $w$ satisfies $\varphi$.

If we assume that all $w'$ accessible to $w$ satisfies $\varphi$, we obtain $\{v_{w'}(\varphi) \mid w R w'\} \subseteq F$. It follows that $\bigwedge \{v_{w'}(\varphi) \mid w R w'\} \geq \bigwedge F$ and thus $v_w(\Box \varphi) \in F$ (i.e. $w \Vdash_A \Box \varphi$). 

Now, let us assume that $F$ is not a filter and there is $a \geq b \in F$ such that $a \notin F$. Then we consider a model with two worlds $w$ and $w'$ such that (i) $w R w$ and $w R w'$ and (ii) $v_w(C) = b$ and $v_{w'}(C) = a$ for a propositional variable $C$. We obtain $v_w(\Box C) = b$, and thus $w \Vdash_A C$. However, note that $w' \nVdash_A C$ since $a \notin F$. 

If, on the other hand, $\bigwedge F \notin F$, then we may obtain a model in which there is a world $w$ that accesses worlds $w_f$, $f \in F$, and such that for a propositional variable $C$ we have $v_{w_f}(C) = f$ for every $f \in F$. Therefore, we find that every world accessed by $w$ the formula $C$ is valid, but that $w$ does not validate $\Box C$.

\end{proof}

Let us briefly consider the finite and infinite cases for lattices and how it relates to regular modal valuations. Indeed, every finite lattice is complete. Additionally, whenever the set of designated values $F$ is a filter on the finite lattice $A$, we have $\bigwedge F \in F$. Hence, for a finite lattice, we can state \cref{filter-and-regular} simply as: 

\begin{corollary}
Let $A$ be a finite grid, and $F$ be a set of designated values in $A$. The normal extension $\Vdash_{A, F}$ is regular if, and only if, $F$ is a filter. 
\end{corollary}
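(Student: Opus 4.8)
The plan is to derive the corollary directly from \cref{filter-and-regular} by checking that, in the finite case, the two hypotheses of that theorem collapse to the single condition ``$F$ is a filter.'' So first I would invoke \cref{filter-and-regular}: the normal extension $\Vdash_{A,F}$ is regular if and only if $\bigwedge F \in F$ \emph{and} $F$ is a filter. It therefore suffices to show that, when $A$ is finite, ``$F$ is a filter'' already implies ``$\bigwedge F \in F$,'' so that the first conjunct is redundant. The converse direction (regular $\Rightarrow$ $F$ is a filter) is immediate, since it is just a weakening of one conjunct.

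Next I would argue the implication ``$F$ a filter $\Rightarrow \bigwedge F \in F$'' in a finite lattice. Since $A$ is finite, $F$ is a finite nonempty set, say $F = \{f_1, \dots, f_n\}$. A filter is closed under the binary meet, so by induction on $n$ the element $f_1 . f_2 . \cdots . f_n$ lies in $F$. But in any lattice (in particular a finite one, where all meets exist) this finite meet equals $\bigwedge F$, the greatest lower bound of $F$. Hence $\bigwedge F \in F$, as required. I would also remark in passing that finiteness guarantees completeness of $A$, so the hypothesis of \cref{filter-and-regular} that $A$ be complete is automatically met and the statement of the corollary need not mention it.

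Putting these together: if $F$ is a filter on the finite lattice $A$, then both conditions of \cref{filter-and-regular} hold, so $\Vdash_{A,F}$ is regular; conversely, if $\Vdash_{A,F}$ is regular, then \cref{filter-and-regular} gives that $F$ is a filter. This establishes the biconditional.

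There is essentially no hard step here — the corollary is a routine specialization. The only point that requires a line of genuine (if elementary) argument is the claim that in a finite lattice closure under binary meet upgrades to $\bigwedge F \in F$; everything else is bookkeeping and a citation of \cref{filter-and-regular}. One small caveat worth stating explicitly in the write-up is that the paper's \cref{filter-and-regular} is phrased for complete lattices, and here I am using the standard fact that every finite lattice is complete (already noted in the surrounding text), so the specialization is legitimate. I would also silently read the corollary's word ``grid'' as ``lattice,'' matching the terminology of \cref{filter-and-regular}.
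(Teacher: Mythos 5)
Your proof is correct and follows essentially the same route as the paper: the paper likewise derives the corollary from \cref{filter-and-regular} by observing that every finite lattice is complete and that a filter in a finite lattice contains its own meet (closure under binary meets iterated over the finitely many elements). The only point you assume silently, as does the paper, is that $F$ is nonempty, which is needed for the finite-meet argument.
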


For infinite cases, we have different scenarios:
\begin{enumerate}
    \item Consider the lattice $D$ with domain in the real line interval $[0,1]$ with the usual order and with negation $-x = 1-x$. The normal modal valuation will produce a regular modal extension when the filter is $F_1 = [0.5, 1]$ and a non-regular modal extension when it is $F_2 = ]0.5, 1]$. Note that $\bigwedge F_1 = 0.5 \in F_1$ and $\bigwedge F_2 = 0.5 \notin F_2$.
    \item Moreover, consider the Boolean algebra $B$ with the usual operations. Let $U$ be an ultrafilter over $B$. Consequently, the logic operating in the normal modal extension will be classical logic. Note that every finite model will operate precisely in the same way as in classical Kripke finite structures. However, if the ultrafilter $U$ is non-principal, then $\bigwedge U \notin U$. Consequently, the normal modal extension obtained will be regular if, and only if, the ultrafilter is principal.
\end{enumerate}
This shows that we can produce natural instances where NAW fails as $\pair{D, F_2}$ and $\pair{B, U}$ are usual lattice semantics for fuzzy logics and classical logic, respectively. 

\section{Formulas characterizing lattice structures}

Unless otherwise stated, we assume that implication $x \supset y$ is defined with the standard operation $-x+y$. However, it is important to note that this assumption is not favored in many important cases. Indeed, the logic LP (see. \cite[p. 142-160]{priest2008introduction}) defines implication as $-x+y$ and this choice undermines \textit{Modus Ponens}. However, a modification of implication can make \textit{Modus Ponens} valid in LP's three-valued lattice. One such change is simply to attribute top value\footnote{If the lattice has a top value, we will generally refer to this value with $1$ and, if it has a bottom value, we will generally refer to this value with $0$.} whenever the consequent is bigger than the antecedent and attribute the value of the consequent otherwise:

\begin{equation}\label{implicative-implication}
v(X \supset Y) = 
	\begin{cases}
		1&, v(X) \leq v(Y)\\
		v(Y)&, \text{ otherwise}
	\end{cases}
\end{equation}

This implication can be defined in any lattice that has a top value $1$. Implication is defined in this way for many non-classical logics to guarantee that the resulting logic satisfies the deduction theorem. Observe further that having an implication as in Equation~\ref{implicative-implication} can be very useful since in this case any set of designated values will be implicative. 

\begin{theorem}
    If $L$ is a lattice with top value $1$, implication $\supset$ as in Equation~\ref{implicative-implication} and $F$ is a set of designated values in $L$, then $F$ is implicative in $L$.
\end{theorem}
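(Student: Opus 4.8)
The plan is to unwind the two definitions involved and to observe that the statement is almost immediate once the cases of Equation~\ref{implicative-implication} are spelled out. Recall that ``$F$ is implicative in $L$'' means precisely that $a \leq b \mso a \supset b \in F$ for all $a,b \in L$. So I would fix arbitrary $a,b \in L$ with $a \leq b$; the task reduces to showing that $a \supset b \in F$.

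First I would compute $a \supset b$ by reading Equation~\ref{implicative-implication} as a binary operation on lattice values: since the first argument is below the second, we are in the first case of the definition, so $a \supset b = 1$, the top element of $L$. Hence it suffices to establish that $1 \in F$.

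Next I would derive $1 \in F$ from the hypothesis that $F$ is a set of designated values, i.e.\ a (nonempty) upward-closed subset of $L$. Picking any $f \in F$, we have $f \leq 1$ because $1$ is the greatest element of $L$, and upward closure gives $1 \in F$. Combining this with the previous step yields $a \supset b = 1 \in F$; since $a \leq b$ was arbitrary, $F$ is implicative in $L$.

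There is no genuine obstacle here; the only point worth a word of care is the appeal to nonemptiness of $F$ in order to conclude $1 \in F$ --- in the degenerate case $F = \emptyset$ the statement fails, but such a set is excluded by the standing reading of a set of designated elements as the upset underlying a notion of validity, hence nonempty. One may also note that the argument uses nothing about $\supset$ beyond the first case of Equation~\ref{implicative-implication}, so the same conclusion holds for any operation returning $1$ whenever the antecedent value lies below the consequent value.
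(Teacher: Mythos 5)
Your proof is correct and follows essentially the same route as the paper's: compute $a \supset b = 1$ from the first case of Equation~\ref{implicative-implication} and conclude $1 \in F$ from upward closure. Your explicit remark that nonemptiness of $F$ is needed to get $1 \in F$ is a small point of care the paper's own proof glosses over, but it does not change the argument.
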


\begin{proof}
    Let $a$ and $b$ be values in $L$ such that $a \leq b$. Then $a \supset b = 1$. Because $F$ is upward closed, $1 \in F$ and so $(a \supset b) \in F$.
\end{proof}

Implicative filters preserve some important traditional modal validities in their normal modal extensions, as we see next. In short, we can show any implication by establishing that the antecedent is smaller than the consequent --- which may be very convenient in lattice semantics. Recall that most notable modal properties are written in terms of implications such as axioms K, 4, 5 and so on.

\begin{notation}
Let $X$ and $Y$ be subsets of the lattice $L$, we use $X + Y$ to refer to $\{x+y \mid x \in X \land y \in Y\}$.
\end{notation}

\begin{lemma} \label{indexed-distribution}
Let $L$ be a lattice, $F$ and $G$ be subsets of $L$ such that $\bigwedge F$, $\bigwedge G$ and $\bigwedge (F + G)$ exist, then
$$\bigwedge F + \bigwedge G \leq \bigwedge (F + G)$$
\end{lemma}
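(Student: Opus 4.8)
The plan is to show that $\bigwedge F + \bigwedge G$ is a lower bound of the set $F + G$; since $\bigwedge(F+G)$ is by definition the \emph{greatest} lower bound of $F+G$, the inequality follows immediately. So the real content is just verifying the lower-bound claim.

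First I would fix an arbitrary element of $F + G$, which by the \textbf{Notation} above has the form $f + g$ with $f \in F$ and $g \in G$. Since $\bigwedge F$ is a lower bound of $F$ we have $\bigwedge F \leq f$, and similarly $\bigwedge G \leq g$. Next I would invoke monotonicity of the join operation $+$ in each argument: from $\bigwedge F \leq f$ we get $\bigwedge F + \bigwedge G \leq f + \bigwedge G$, and from $\bigwedge G \leq g$ we get $f + \bigwedge G \leq f + g$; chaining these gives $\bigwedge F + \bigwedge G \leq f + g$. (The monotonicity of $+$ is a standard lattice fact: if $a \leq b$ then $a + c \leq b + c$, because $b + c$ is an upper bound of $\{a, c\}$ and $a + c$ is the least such.) Since $f + g$ was an arbitrary element of $F + G$, this shows $\bigwedge F + \bigwedge G$ is a lower bound of $F + G$.

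Finally I would conclude: because $\bigwedge(F + G)$ exists and is the greatest lower bound of $F + G$, any lower bound of $F + G$ is below it, so $\bigwedge F + \bigwedge G \leq \bigwedge(F + G)$.

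There is no substantive obstacle here — the statement is the "easy half" of what would be an equality under a distributivity hypothesis (the \textbf{down-distribution} property mentioned in the Background section). The only points requiring a word of care are that all three meets are assumed to exist (so the expressions are well defined), and that we use monotonicity of $+$ rather than any distributive law, since the lattice $L$ is not assumed distributive. If desired, one could note that the reverse inequality can fail, which is precisely why down-distribution is singled out as a nontrivial property, but that is not needed for this lemma.
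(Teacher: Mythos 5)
Your proof is correct and is essentially the same elementary argument as the paper's, differing only in the order of operations: the paper observes that $\bigwedge F$ and $\bigwedge G$ are each individually lower bounds of $F+G$ (since $f+g\geq f$ and $f+g\geq g$), deduces that each is $\leq \bigwedge(F+G)$, and then takes the join, whereas you form the join first and use monotonicity of $+$ to show $\bigwedge F + \bigwedge G$ is itself a lower bound of $F+G$. Both routes rest only on the universal properties of meet and join, and your remarks about existence of the meets and about this being the non-distributive half of down-distribution are accurate.
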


\begin{proof}
Fix $a = \bigwedge(F + G)$, $b = \bigwedge F$ and $c = \bigwedge G$. Indeed, every $x \in (F+G)$ is bigger than $b$, since $b$ is smaller than every member of $F$. Thus, from the maximality of $a$, $b \leq a$ and similarly $c \leq a$. Thence $b + c \leq a$ as desired.
\end{proof}

\begin{lemma}\label{distribution-disjunction}
Let $L = \pair{D, +, ., -, \supset}$ be a complete lattice, and $F$ be an implicative filter. Then for any $L$-model $M = \pair{W, R, v}$ and $w\in W$
$$w \Vdash_{L,F} (\Box \varphi \lor \Box \psi) \so \Box(\varphi \lor \psi)$$.
\end{lemma}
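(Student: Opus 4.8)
The plan is to reduce the claimed modal implication to a single inequality in $L$ and then invoke \cref{indexed-distribution}. Since $F$ is implicative in $L$, in order to establish $w \Vdash_{L,F} (\Box \varphi \lor \Box \psi) \so \Box(\varphi \lor \psi)$ it suffices to show
$$v_w(\Box \varphi \lor \Box \psi) \leq v_w(\Box(\varphi \lor \psi)),$$
because from such an inequality and the implicativeness of $F$ the value $v_w\big((\Box \varphi \lor \Box \psi) \so \Box(\varphi \lor \psi)\big)$ belongs to $F$, which is exactly $w \Vdash_{L,F} (\Box \varphi \lor \Box \psi) \so \Box(\varphi \lor \psi)$.

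First I would unfold both sides with the clauses of \cref{normal-modal-valuation}. Setting $F' = \{v_{w'}(\varphi) \mid wRw'\}$ and $G' = \{v_{w'}(\psi) \mid wRw'\}$, the left-hand side is $\bigwedge F' + \bigwedge G'$, and the right-hand side is $\bigwedge \{v_{w'}(\varphi) + v_{w'}(\psi) \mid wRw'\}$. All these greatest lower bounds exist since $L$ is complete, so \cref{indexed-distribution} applies to $F'$ and $G'$ and gives $\bigwedge F' + \bigwedge G' \leq \bigwedge(F' + G')$.

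Next I would relate $\bigwedge(F' + G')$ to the meet on the right-hand side. The set $\{v_{w'}(\varphi) + v_{w'}(\psi) \mid wRw'\}$ is a subset of $F' + G' = \{x + y \mid x \in F',\ y \in G'\}$, namely the restriction to the ``diagonal'' pairs indexed by a single successor $w'$; since the meet of a subset is always at least the meet of the whole set, we get $\bigwedge(F' + G') \leq \bigwedge \{v_{w'}(\varphi) + v_{w'}(\psi) \mid wRw'\} = v_w(\Box(\varphi \lor \psi))$. Chaining this with the inequality from \cref{indexed-distribution} yields $v_w(\Box \varphi \lor \Box \psi) \leq v_w(\Box(\varphi \lor \psi))$, as required.

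The computations are routine; the only point that needs care is keeping the monotonicity direction of $\bigwedge$ straight and distinguishing the meet $\bigwedge(F' + G')$ over all pairs from the meet over the diagonal, so that \cref{indexed-distribution} is applied in the correct direction. The degenerate case in which $w$ has no $R$-successors is harmless: every meet involved is then the top element, and the inequality holds trivially.
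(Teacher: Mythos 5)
Your proof is correct and follows essentially the same route as the paper: unfold both sides via the normal modal valuation, apply \cref{indexed-distribution}, and conclude via implicativeness of $F$. In fact you are slightly more careful than the paper at one point — the paper asserts $\bigwedge(F+G)$ \emph{equals} the meet over the diagonal set $\{v_{w'}(\varphi)+v_{w'}(\psi) \mid wRw'\}$, whereas you correctly observe that the diagonal is only a subset of $F'+G'$ and that the inequality $\bigwedge(F'+G') \leq \bigwedge\{v_{w'}(\varphi)+v_{w'}(\psi) \mid wRw'\}$ is what is actually needed and what holds.
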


\begin{proof}
Let $w$ be a world in a $L$-model $M = \pair{W, R, v}$. We note that
\begin{equation*}
v_w(\Box \varphi \lor \Box \psi) = 
\bigwedge\underbrace{\{(v_{w'}(\varphi)) \mid w R w'\}}_{F} +
\bigwedge\underbrace{\{(v_{w'}(\psi)) \mid w R w'\}}_{G}
\end{equation*}
So, from Lemma~\ref{indexed-distribution}, we obtain that 
\begin{equation*}
v_w(\Box \varphi \lor \Box \psi) = \bigwedge F + \bigwedge G \leq \bigwedge (F + G)
\end{equation*}
But $\bigwedge (F + G) = \bigwedge \{(v_{w'}(\varphi) + v_{w'}(\psi)) \mid w R w'\}$. Thence, we have $v_w(\Box \varphi \lor \Box \psi) \leq v_w(\Box(\varphi \lor \psi))$.
Since $F$ is implicative, $w$ satisfies the formula $(\Box \varphi \lor \Box \psi) \so \Box(\varphi \lor \psi)$.
\end{proof}

We extend the property for models and frames using the usual strategy:

\begin{definition}
Let $A$ be a lattice, $F \subset A$ and $\Vdash$ be a modal extension of $\vDash_{A,F}$ in the language $L$. For a model $M = \pair{W, R, v}$, we will say that $M$ satisfies the $L$-formula $\varphi$ (i.e. $M \Vdash \varphi$) when, for every $w \in W$, $w \Vdash \varphi$.
\end{definition}

\begin{definition}
Let $A$ be a lattice, $F \subset A$ and $\Vdash$ be a modal extension of $\vDash_{A,F}$ in the language $L$. For a frame $\Fcal = \pair{W, R}$, we will say that $\Fcal$ satisfies the $L$-formula $\varphi$ (i.e. $\Fcal \Vdash \varphi$) when, for every $A$-model $M$ with frame $\Fcal$, $M \Vdash \varphi$.
\end{definition}

\begin{theorem}
Let $L$ be a complete lattice and $F$ be a set of designated values in $L$. Then $F$ is implicative if, and only if, every frame $\Fcal = \pair{W, R}$ is such that $\Fcal \Vdash_{L,F} (\Box \varphi \lor \Box \psi) \to \Box(\varphi \lor \psi)$.
\end{theorem}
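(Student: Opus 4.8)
The plan is to prove the two implications separately. The ($\Rightarrow$) direction is essentially already contained in Lemma~\ref{distribution-disjunction} and just needs to be lifted to frames; the ($\Leftarrow$) direction calls for a small counter-model built from a witness to non-implicativity.

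For ($\Rightarrow$), assume $F$ is implicative. Fix a frame $\mathcal{F} = \pair{W,R}$, an $L$-model $M = \pair{W,R,v}$ on it, and $w \in W$. Writing $F' = \{v_{w'}(\varphi) \mid w R w'\}$ and $G' = \{v_{w'}(\psi) \mid w R w'\}$, completeness of $L$ guarantees the relevant meets exist, and Lemma~\ref{indexed-distribution} gives $v_w(\Box\varphi\lor\Box\psi) = \bigwedge F' + \bigwedge G' \leq \bigwedge(F'+G') = v_w(\Box(\varphi\lor\psi))$. Since $F$ is implicative, $v_w\big((\Box\varphi\lor\Box\psi)\to\Box(\varphi\lor\psi)\big) \in F$, i.e.\ $w \Vdash_{L,F} (\Box\varphi\lor\Box\psi)\to\Box(\varphi\lor\psi)$. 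As $w$ and $M$ were arbitrary, $\mathcal{F} \Vdash_{L,F} (\Box\varphi\lor\Box\psi)\to\Box(\varphi\lor\psi)$ for every frame $\mathcal{F}$. (This is the argument of Lemma~\ref{distribution-disjunction} lifted through the two preceding definitions of model- and frame-satisfaction; note that it never uses the filter property of $F$, only its implicativity.)

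For ($\Leftarrow$) I would argue by contraposition. Suppose $F$ is not implicative, so there are $a, b \in L$ with $a \leq b$ but $a \supset b = -a + b \notin F$. Build the frame $\mathcal{F} = \pair{\{w, w_1, w_2\},\, \{(w,w_1),(w,w_2)\}}$ and the $L$-model $M$ on it fixing two distinct propositional variables $p, q$ by the \emph{crossed} assignment $v_{w_1}(p) = b$, $v_{w_1}(q) = a$, $v_{w_2}(p) = a$, $v_{w_2}(q) = b$ (the values at $w$ and at all other variables are immaterial, since $v_w$ of a boxed formula does not depend on them and $M \nVdash$ only requires a single failing world). Using $a \leq b$, hence $a.b = a$ and $a+b = b$, one computes $v_w(\Box p) = b.a = a$ and $v_w(\Box q) = a.b = a$, so $v_w(\Box p \lor \Box q) = a + a = a$; while $v_{w_1}(p\lor q) = b + a = b = a + b = v_{w_2}(p\lor q)$, so $v_w(\Box(p\lor q)) = b$. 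Hence $v_w\big((\Box p\lor\Box q)\to\Box(p\lor q)\big) = -a + b \notin F$, that is $w \nVdash_{L,F} (\Box p\lor\Box q)\to\Box(p\lor q)$; thus $M$, and a fortiori the frame $\mathcal{F}$, fails this instance of the schema.

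The one place needing care is the counter-model: we must realize a pair $x \leq y$ with $-x+y \notin F$ simultaneously as $v_w(\Box p \lor \Box q)$ and $v_w(\Box(p\lor q))$, and this is exactly where the crossed valuation earns its keep, forcing $\bigwedge F' + \bigwedge G'$ strictly below $\bigwedge(F'+G')$ (the gap permitted by Lemma~\ref{indexed-distribution}). It is worth checking the degenerate case $a = b$ separately: then both modal values collapse to $a$ and the implication still takes value $-a + a \notin F$, so the construction goes through uniformly.
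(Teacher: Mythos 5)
Your proposal is correct and follows essentially the same route as the paper: the forward direction lifts Lemma~\ref{distribution-disjunction} (via Lemma~\ref{indexed-distribution}) to frames, and the converse uses exactly the paper's crossed two-successor counter-model with values $a$ and $b$ swapped between the two variables. The only cosmetic difference is relabelling of worlds and the explicit remark on the degenerate case $a=b$, which the paper leaves implicit.
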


\begin{proof}
That every $\Fcal$ satisfies $(\Box \varphi \lor \Box \psi) \to \Box(\varphi \lor \psi)$ under the designated values in $F$ comes directly from the Lemma~\ref{distribution-disjunction}.

Now suppose $F$ is not implicative. Then there are $a, b \in L$ such that $a \leq b$ and $a \supset b \notin F$. Consider a model $M$ with at least the worlds $w_1, w_2, w_3$ such that $w_1 R w_2$ and $w_1 R w_3$ and valuation $v_{w_2}(\varphi) = a$, $v_{w_2}(\psi) = b$, $v_{w_3}(\varphi) = b$, $v_{w_3}(\psi) = a$.  Then $v_{w_1}(\Box \varphi) = a$ and $v_{w_1}(\Box \psi) = a$; consequently $v_{w_1}(\Box \varphi \lor \Box \psi) = a$. On the other hand, $v_{w_1}(\Box (\varphi \lor \psi)) = b$. So $v_{w_1}((\Box \varphi \lor \Box \psi) \to \Box(\varphi \lor \psi)) \notin F$ as wanted.
\end{proof}

For dealing with axiom K, it will be instructive to consider different versions of implication. We will first consider the case that requires that implication satisfies a condition similar to that in Equation~\ref{implicative-implication} and then we consider the case where $x \supset y$ is simply $-x + y$.

\begin{definition}
Let $L$ be a lattice and let $F \subset L$ be a set of designated values. Define implication ($\supset$) as any function $L \times L \to L$. We say that implication is \textbf{deductive} if for every $a, b \in A$
\begin{enumerate}
	\item if $a \leq b$, then $b \leq (a \supset b) \in F$.
	\item if $a \nleq b$, then $(a \supset b) = b$.
\end{enumerate}
We say that implication is \textbf{strictly deductive} if there is a top value $1$ in $L$ and $(a \supset b) = 1$ when $a \leq b$.
\end{definition}

\begin{definition}
    We say that a lattice $L$ with filter $F$ is \textbf{linear outside} $F$ if every non-linearity of $L$ is contained in $F$, i.e., if $x \in L$ is not comparable with some $y \in L$, then $x \in F$.
\end{definition}

\begin{lemma}\label{lemma-for-K-implicative-implication}
Let $M$ be a $L$-model, $F$ be a filter in $L$ and say that implication is deductive in $L$. 
If $L$ is a linear lattice outside the filter, then every $w \in M$ satisfies ($\Vdash_{L,F}$) the axiom K, i.e. $\Box(\varphi \so \psi) \so (\Box \varphi \so \Box \psi)$.
\end{lemma}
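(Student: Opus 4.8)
The plan is to fix a world $w$ in the $L$-model $M$ and show directly that $v_w\big(\Box(\varphi \so \psi) \so (\Box\varphi \so \Box\psi)\big) \in F$. Since implication is deductive, it suffices to establish the order inequality $v_w(\Box(\varphi\so\psi)) \leq v_w(\Box\varphi \so \Box\psi)$ whenever these two values are comparable; when they are not comparable, linearity outside $F$ gives $v_w(\Box(\varphi\so\psi)) \in F$, and I will argue separately that this still forces the whole implication into $F$. So the core of the argument is a computation with meets over accessible worlds.

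First I would expand the relevant values. Write $R[w] = \{w' \mid wRw'\}$ and abbreviate $a_{w'} = v_{w'}(\varphi)$, $b_{w'} = v_{w'}(\psi)$. Then $v_w(\Box\varphi) = \bigwedge_{w'} a_{w'}$, $v_w(\Box\psi) = \bigwedge_{w'} b_{w'}$, and $v_w(\Box(\varphi\so\psi)) = \bigwedge_{w'} (a_{w'} \supset b_{w'})$. The key intermediate claim is that $v_w(\Box(\varphi\so\psi)) . v_w(\Box\varphi) \leq v_w(\Box\psi)$, i.e. $\big(\bigwedge_{w'}(a_{w'}\supset b_{w'})\big) . \big(\bigwedge_{w'} a_{w'}\big) \leq \bigwedge_{w'} b_{w'}$. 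For this I fix an arbitrary $w' \in R[w]$ and show the left side is $\leq b_{w'}$: the left side is below $(a_{w'}\supset b_{w'}) . \bigwedge_{w''} a_{w''} \leq (a_{w'}\supset b_{w'}) . a_{w'}$, so it is enough to see $(a_{w'}\supset b_{w'}) . a_{w'} \leq b_{w'}$. This is a pointwise fact about deductive implication: if $a_{w'} \leq b_{w'}$ then $(a_{w'}\supset b_{w'}).a_{w'} = (a_{w'}\supset b_{w'}).a_{w'} \leq a_{w'} \leq b_{w'}$; if $a_{w'}\nleq b_{w'}$ then $(a_{w'}\supset b_{w'}) = b_{w'}$ and so the product is $b_{w'}.a_{w'} \leq b_{w'}$. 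Taking the meet over all $w'$ yields the claim.

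From $v_w(\Box(\varphi\so\psi)) . v_w(\Box\varphi) \leq v_w(\Box\psi)$ I then pass to the implication. If $v_w(\Box\varphi) \leq v_w(\Box\psi)$, then $v_w(\Box\varphi \so \Box\psi) \in F$ by the deductive property, and since $F$ is upward closed the whole outer implication lands in $F$ (using again that $v_w(\Box(\varphi\so\psi)) \leq (\text{that value})$ or, more carefully, that the outer implication is either $1 \in F$ or equals the consequent $\in F$). If $v_w(\Box\varphi) \nleq v_w(\Box\psi)$, then $v_w(\Box\varphi \so \Box\psi) = v_w(\Box\psi)$, and I must show $v_w(\Box(\varphi\so\psi)) \leq v_w(\Box\psi)$ or else $v_w(\Box(\varphi\so\psi)) \in F$. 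Here is where \emph{linear outside $F$} enters: either $v_w(\Box(\varphi\so\psi))$ is comparable to $v_w(\Box\psi)$, and then from the meet inequality above together with $v_w(\Box(\varphi\so\psi)) . v_w(\Box\varphi) \leq v_w(\Box\psi)$ I can extract $v_w(\Box(\varphi\so\psi)) \leq v_w(\Box\psi)$ in the comparable case (if instead $v_w(\Box\psi) < v_w(\Box(\varphi\so\psi))$, I would derive a contradiction with $v_w(\Box\varphi)\nleq v_w(\Box\psi)$ by a short order chase), or the two are incomparable and then $v_w(\Box(\varphi\so\psi)) \in F$ directly; in either subcase the deductive property gives the outer implication value in $F$ (it is $1$, or it is the consequent $v_w(\Box\psi)$ which dominates, or the antecedent is already in $F$ forcing the value up).

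The main obstacle I anticipate is precisely the bookkeeping in this last paragraph: deductive implication is not strictly deductive, so $a \leq b$ only gives $b \leq (a\supset b) \in F$ rather than $(a\supset b) = 1$, and one has to be careful that the outer implication $\Box(\varphi\so\psi) \so (\Box\varphi \so \Box\psi)$ really lands in $F$ in every combination of comparability cases for the three boxed values. I expect the clean route is to treat the comparability of $v_w(\Box\varphi)$ with $v_w(\Box\psi)$ first (using $F$ being a filter to handle the meet inequality), and only invoke linearity outside $F$ to resolve the residual incomparability between $v_w(\Box(\varphi\so\psi))$ and $v_w(\Box\psi)$. Everything else is the pointwise lemma about $(a\supset b).a \leq b$ and monotonicity of meets, which are routine.
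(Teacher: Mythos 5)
Your opening move is genuinely different from the paper's, and it is a good one: the pointwise fact $(x \supset y).x \leq y$ for a deductive implication, lifted through the meets, gives the single inequality $v_w(\Box(\varphi\so\psi)) . v_w(\Box\varphi) \leq v_w(\Box\psi)$, and this needs neither the filter property nor linearity. (The paper instead argues by contradiction, partitioning the accessible worlds into those where $v_x(\varphi)\leq v_x(\psi)$ and those where not, and comparing the meets over the two classes.) Your Case 1 ($v_w(\Box\varphi)\leq v_w(\Box\psi)$) is also fine: the inner implication lands in $F$ by deductiveness, and then the outer implication is in $F$ whichever clause of deductiveness applies to it.

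The endgame, however, has two genuine gaps. Write $a=v_w(\Box(\varphi\so\psi))$, $b=v_w(\Box\varphi)$, $c=v_w(\Box\psi)$, and suppose $b\nleq c$ and $a\nleq c$, so the whole formula evaluates to $c$ and you must show $c\in F$. First, ``the antecedent is already in $F$ forcing the value up'' is false for deductive implication: when $a\nleq c$ the value of $a\supset c$ is $c$ no matter where $a$ sits, so $a\in F$ buys you nothing; what saves the incomparable subcase is that linearity outside $F$ puts $c$ itself (not merely $a$) into $F$. Second, in the comparable subcase $c<a$ there is no contradiction to be extracted from $a.b\leq c$ and $b\nleq c$ by an order chase alone: in the lattice $\{0,c,a,b,1\}$ with $0<c<a<1$ and $b$ incomparable to $a$ and $c$, one has $a.b=0\leq c$, $c<a$, $b\nleq c$. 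The hypotheses on $F$ are indispensable here. The correct finish is: assume $c\notin F$; then by linearity outside $F$ the element $c$ is comparable to everything, so $a\nleq c$ and $b\nleq c$ force $c<a$ and $c<b$, whence $c\leq a.b\leq c$, i.e.\ $c=a.b$. If $a$ and $b$ are comparable then $a.b\in\{a,b\}$, contradicting $c<a$ and $c<b$; if they are incomparable then $a,b\in F$ by linearity outside $F$, and $c=a.b\in F$ because $F$ is a filter --- a contradiction either way. Hence $c\in F$ and the lemma follows. With this replacement your route closes, and is arguably cleaner than the paper's.
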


\begin{proof}
Suppose $w \in M$ is such that $w \nVdash_F \Box(\varphi \so \psi) \so (\Box \varphi \so \Box \psi)$. Because implication is deductive, $a = v_w(\Box(\varphi \so \psi)) \nleq v_w(\Box \varphi \so \Box \psi) = c$ and $c \notin F$ for some values $a$ and $c$ in $L$. Subsequently, we have that $v_w(\Box \varphi) \nleq v_w(\Box \psi)$, $v_w(\Box \psi) \notin F$ and $v_w(\Box \psi) = c$. Consider  $A = \{v_x(\varphi \so \psi) \mid w R x\}$, $B = \{v_x(\varphi) \mid w R x\}$ and $C = \{v_x(\psi) \mid w R x\}$. 

Call $B_i$ and $C_i$ the elements of $B$ and $C$ such that $v_x(\varphi) \nleq v_x(\psi)$ and call $B_j$ and $C_j$ the elements of $B$ and $C$ such that $v_x(\varphi) \leq v_x(\psi)$. 
Recall that $\bigwedge C = c \notin F$ and $L$ is linear outside $F$.

Let $c_i = \bigwedge C_i$ and $c_j = \bigwedge C_j$. Since $c = c_i . c_j$, $c \notin F$ and $F$ is a filter, at least one of $c_i$ and $c_j$ is outside the filter. Because the lattice is linear outside the filter, we have $c_i = c$ or $c_j = c$.
Suppose $c_i = c$. In this case, since every element of $A$ is either in $C_i$ or in the filter and $c_i$ is smaller than every member of the filter, we conclude $\bigwedge A = a = c$. This in turn contradicts $v_w(\Box(\varphi \so \psi)) \nleq v_w(\Box \varphi \so \Box \psi)$. So we have $c_j = c$. However, note that $\bigwedge B_j \leq \bigwedge C_j$ once every member of $C_j$ has a smaller element in $B_j$. Trivially, $b \leq \bigwedge B_j$; so $b \leq c$, contradicting $v_w(\Box \varphi) \nleq v_w(\Box \psi)$.
\end{proof}

\begin{theorem}\label{thlinearoutsidethefilter}
Let $L$ be a complete lattice, $F$ be a non-empty set of designated values in $L$ and implication be strictly deductive in $L$. Then $L$ is linear outside $F$ if, and only if, every frame $\Fcal = \pair{W, R}$ is such that $\Fcal \Vdash_{L,F} \Box(\varphi \so \psi) \so (\Box \varphi \so \Box \psi)$.
\end{theorem}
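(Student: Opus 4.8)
The plan is to prove the two implications separately. The ``if'' direction (linearity outside $F$ gives frame‑validity of $K$) reduces almost immediately to \cref{lemma-for-K-implicative-implication}; the converse is proved by contraposition, building an explicit three‑world countermodel whose valuations are chosen so that the relevant meets fall into place.

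For the forward direction I would first note that a strictly deductive implication is in particular deductive: since $F$ is non‑empty and upward closed we have $1 \in F$, so when $a \leq b$ we get $b \leq 1 = (a \supset b) \in F$, and the clause for $a \nleq b$ (namely $(a\supset b) = b$) is inherited from the $\supset$ used in Equation~\ref{implicative-implication}. Since $L$ is assumed linear outside $F$, in particular $F$ is a filter, so \cref{lemma-for-K-implicative-implication} applies and tells us that for every $L$-model $M$ every world of $M$ satisfies $\Box(\varphi \so \psi) \so (\Box \varphi \so \Box \psi)$. Unwinding the definitions of $M \Vdash$ and $\Fcal \Vdash$, this is exactly the statement that every frame $\Fcal = \pair{W,R}$ satisfies $\Fcal \Vdash_{L,F} \Box(\varphi \so \psi) \so (\Box \varphi \so \Box \psi)$.

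For the converse I argue by contraposition: suppose $L$ is not linear outside $F$, so there are incomparable $x, y \in L$ with $x \notin F$. Take the frame $\Fcal$ with $W = \{w, w_1, w_2\}$ and $R = \{(w,w_1),(w,w_2)\}$, and the $L$-model $M$ on $\Fcal$ instantiating $\varphi,\psi$ by propositional variables $p,q$ with $v_{w_1}(p) = y$, $v_{w_1}(q) = x$ and $v_{w_2}(p) = v_{w_2}(q) = y$ (values at $w$ are irrelevant). Then $v_w(\Box \psi) = x.y$, which lies below $x$ and is therefore outside $F$ since $F$ is upward closed; moreover incomparability of $x,y$ forces $v_w(\Box \varphi) = y \nleq x.y$ (else $y \leq x.y \leq x$). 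Using strict deductiveness, $y \nleq x$ gives $v_{w_1}(p \so q) = x$ and $y \leq y$ gives $v_{w_2}(p \so q) = 1$, so $v_w(\Box(\varphi \so \psi)) = x . 1 = x \nleq x.y$ (else $x \leq x.y \leq y$). Hence $v_w(\Box \varphi \so \Box \psi) = x.y \notin F$, and then $v_w\big(\Box(\varphi \so \psi) \so (\Box \varphi \so \Box \psi)\big) = x.y \notin F$, so $w \nVdash_{L,F} \Box(\varphi \so \psi) \so (\Box \varphi \so \Box \psi)$ and $\Fcal$ fails the formula.

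The main obstacle is the design of the countermodel: one must arrange the valuations so that the three meets $v_w(\Box \varphi)$, $v_w(\Box \psi)$, $v_w(\Box(\varphi \so \psi))$ are simultaneously positioned with $v_w(\Box \psi)$ both non‑designated and strictly below the other two, and the key observation making this possible is that a meet always lies below each of its arguments, so an incomparable pair produces the required strict failures of $\leq$. The forward direction is essentially bookkeeping once \cref{lemma-for-K-implicative-implication} is available, modulo the small remark that strict deductiveness entails deductiveness.
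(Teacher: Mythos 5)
Your overall architecture is the same as the paper's: the forward direction is delegated to \cref{lemma-for-K-implicative-implication}, and the converse is a three-world countermodel built from an incomparable pair. Your countermodel is essentially the paper's (the paper sets $\varphi$ constantly equal to one element $a$ of the pair and lets $\psi$ take the values $b$ and $a$; yours is the same configuration up to renaming), and your version is in fact slightly more careful: you only assume $x\notin F$ for one element of the incomparable pair, which is exactly what the negation of ``linear outside $F$'' gives you, whereas the paper assumes both elements lie outside $F$. The computations in your converse check out, including the observation that $x.y\notin F$ because $x.y\leq x$ and $F$ is upward closed.

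There is, however, a genuine gap in your forward direction: the claim that ``since $L$ is assumed linear outside $F$, in particular $F$ is a filter'' is false. Take the diamond $L=\{0,a,b,1\}$ with $a,b$ incomparable, $a.b=0$, $a+b=1$, and $F=\{a,b,1\}$. This $F$ is upward closed and $L$ is linear outside $F$ (the only incomparable elements are $a$ and $b$, both in $F$), yet $a.b=0\notin F$, so $F$ is not a filter. This is not a cosmetic issue: \cref{lemma-for-K-implicative-implication} uses the filter property essentially (in the step ``$c=c_i.c_j$, $c\notin F$ and $F$ is a filter, so one of $c_i,c_j$ is outside the filter''), and in the diamond example above axiom $K$ actually fails --- take $w$ accessing $w_1,w_2$ with $v_{w_1}(\varphi)=v_{w_1}(\psi)=a$, $v_{w_2}(\varphi)=1$, $v_{w_2}(\psi)=b$; then $v_w(\Box\psi)=0$, $v_w(\Box\varphi)=a$, $v_w(\Box\varphi\so\Box\psi)=0$, while $v_w(\Box(\varphi\so\psi))=1.b=b\neq 0$, so the instance of $K$ evaluates to $0\notin F$. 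So the forward implication is simply not true for arbitrary upward-closed $F$; the filter hypothesis must be assumed, not derived. To be fair, the paper has the same blemish --- the theorem says ``non-empty set of designated values'' while both the definition of ``linear outside $F$'' and \cref{lemma-for-K-implicative-implication} are stated for filters, and the paper's proof invokes the lemma without comment --- but your attempt to bridge that mismatch by deducing filterhood from linearity does not work, and you should instead add ``$F$ is a filter'' to the hypotheses (after which the forward direction is, as you say, bookkeeping).
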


\begin{proof}
If $L$ is linear outside the filter, then Lemma~\ref{lemma-for-K-implicative-implication} says that any frame satisfies axiom K. We then assume that $L$ is not linear outside the filter. Thus, there are $a, b \in L \setminus F$ such that $a \nleq b$ and $b \nleq a$. Consider a frame $\Fcal = \pair{W, R}$ with worlds $w_1$, $w_2$ and $w_3$ such that $w_1 R w_2$ and $w_1 R w_3$. Fix then a model $M$ with frame $\Fcal$ and such that $v_{w_2}(\varphi) = v_{w_3}(\varphi) = a$, $v_{w_2}(\psi) = b$ and $v_{w_3}(\psi) = a$. Now, observe that $v_{w_1}(\Box \psi) = a.b$ and $v_{w_1}(\Box \varphi) = a$. So we have $v_{w_1}(\Box \varphi \so \Box \psi) = a \supset a.b = a.b$ for $a \nleq a.b$. Moreover $v_{w_1}(\Box (\varphi \so \psi)) = (a \supset b) . (a \supset a) = b.1 = b$.
Hence $v_{w_1}(\Box(\varphi \so \psi) \so (\Box \varphi \so \Box \psi)) = (b \supset a.b) = a.b \notin F$.
\end{proof}

The converse in Theorem~\ref{thlinearoutsidethefilter} does not work without the assumption of strictly deductive. Possessing solely the implicative property, it is possible to identify a lattice $L$ that is non-linear outside the filter, where the reasoning used in proving Theorem~\ref{thlinearoutsidethefilter} is blocked. Consider the lattice $L$ where there is a pair of values $a \notin F$ and $b \notin F$ such that, for some value $f$, $a \supset a = f \in F$, $f.b \leq a.b$. 
In this case, we have $v_{w_1}(\Box \varphi \so \Box \psi) = a \supset a.b = a.b$ and $v_{w_1}(\Box (\varphi \so \psi)) = (a \supset b) . (a \supset a) = b.f$. This will give us $v_{w_1}(\Box (\varphi \so \psi) \so (\Box \varphi \so \Box \psi)) = (b.f \subset a.b) \in F$ since $L$ is implicative. It is possible to verify that a lattice consisting solely of the elements necessary for this example possesses the K property.\footnote{Precisely, the example will be the lattice $L =\{0, a, b, f, 1\}$, $(a \supset b) = f$ and $x \supset y$ is strictly deductive for all cases where $x \neq a$ and $y \neq b$. The order in $L$ is such that $1$ is top value, $0$ is bottom value and $a$, $b$ and $f$ are incomparable. All frames in this lattice with filter $\{f, 1\}$ satisfy axiom $K$. Examining this thoroughly is time-consuming and contributes little to the paper, so we leave it to the interested reader.}

Let us now consider the case where implication $x \supset y$ is defined as the regular $-x + y$.

\begin{lemma}\label{lemma-for-K-regular}
Let the lattice $L$ be anti-monotone with down-distribution, $M$ be a $L$-model and $F$ be a set of designated values in $L$. If $x \supset y$ is defined as $-x + y$ and $F$ is implicative, then every $w \in M$ satisfies ($\vDash_F$) the axiom K.
\end{lemma}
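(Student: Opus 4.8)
The plan is to show that the value of the axiom-K implication is designated at every world $w$ by arguing that its antecedent is below its consequent in the lattice order, and then invoking that $F$ is implicative. So I would fix an $L$-model $M$ and a world $w$, write $A = \{v_x(\varphi \so \psi) \mid w R x\}$, $B = \{v_x(\varphi) \mid w R x\}$, $C = \{v_x(\psi) \mid w R x\}$, and unwind the definitions: $v_w(\Box(\varphi \so \psi)) = \bigwedge A = \bigwedge\{-v_x(\varphi) + v_x(\psi) \mid w R x\}$, while $v_w(\Box\varphi \so \Box\psi) = -\bigwedge B + \bigwedge C$. The goal becomes the purely lattice-theoretic inequality
\[
\bigwedge\{-v_x(\varphi) + v_x(\psi) \mid w R x\} \;\leq\; -\bigwedge\{v_x(\varphi)\mid w R x\} + \bigwedge\{v_x(\psi)\mid w R x\}.
\]

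To get this I would use the two hypotheses in turn. First, anti-monotonicity gives $-\bigwedge B = -\bigwedge\{v_x(\varphi)\mid w R x\} \geq -v_x(\varphi)$ for each accessible $x$; more precisely I expect one needs $\bigwedge\{-v_x(\varphi)\mid wRx\} \leq -\bigwedge\{v_x(\varphi)\mid wRx\}$, which follows since $\bigwedge B \leq v_x(\varphi)$ implies $-v_x(\varphi) \leq -\bigwedge B$ for every $x$, hence the meet on the left is $\leq -\bigwedge B$. Then down-distribution applied to the sets $\{-v_x(\varphi)\mid wRx\}$ and $\{v_x(\psi)\mid wRx\}$ yields
\[
\bigwedge\{-v_x(\varphi) + v_x(\psi)\mid wRx\} \;\leq\; \bigwedge\Bigl(\{-v_x(\varphi)\mid wRx\} + \{v_x(\psi)\mid wRx\}\Bigr) \;=\; \bigwedge\{-v_x(\varphi)\mid wRx\} + \bigwedge\{v_x(\psi)\mid wRx\},
\]
where the first inequality is Lemma~\ref{indexed-distribution} (applied to $F = \{-v_x(\varphi)\mid wRx\}$, $G = \{v_x(\psi)\mid wRx\}$, noting the Minkowski sum of these two index-aligned families contains $\{-v_x(\varphi)+v_x(\psi)\mid wRx\}$) and the equality is the down-distribution hypothesis. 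Combining with the anti-monotonicity bound on the first summand gives exactly the displayed inequality, i.e. $v_w(\Box(\varphi \so \psi)) \leq v_w(\Box\varphi \so \Box\psi)$. Since $F$ is implicative, this means $v_w\bigl(\Box(\varphi \so \psi) \so (\Box\varphi \so \Box\psi)\bigr) \in F$, so $w \Vdash_F$ axiom K.

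The main obstacle I anticipate is bookkeeping the Minkowski-sum formalism correctly: Lemma~\ref{indexed-distribution} is stated for the full sum $X + Y = \{x+y\}$, whereas what appears naturally here is the ``diagonal'' family $\{-v_x(\varphi)+v_x(\psi)\mid wRx\}$, which is a subset of the full Minkowski sum and therefore has a meet that is $\geq$, not $\leq$, the meet of the full sum — so I must be careful about which direction each inequality goes and make sure the chain still closes. The resolution is that I only need $\bigwedge\{\text{diagonal}\} \leq \bigwedge(\text{full sum})$ fails in general, so instead I should bound directly: for each accessible $x$, $-v_x(\varphi) + v_x(\psi) \geq -\bigwedge B + \bigwedge C$ is false too, so the cleanest route is to observe $-v_x(\varphi)+v_x(\psi) \ge (\bigwedge\{-v_y(\varphi)\mid wRy\}) + (\bigwedge\{v_y(\psi)\mid wRy\})$ is again not immediate — hence the genuine need for down-distribution, which converts $\bigwedge$ of a sum of two whole sets into the sum of the two $\bigwedge$'s, after which the diagonal family sits inside that sum and Lemma~\ref{indexed-distribution}'s reasoning (every diagonal element dominates the common lower bound) applies. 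I would write this step out explicitly rather than gesturing at it, since it is the one place the argument could silently go wrong.
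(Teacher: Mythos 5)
There is a genuine gap at the central step. You need the upper bound $v_w(\Box(\varphi \so \psi)) \leq v_w(\Box\varphi \so \Box\psi)$, i.e.
\[
\bigwedge\{-v_x(\varphi)+v_x(\psi)\mid wRx\} \;\leq\; -\bigwedge B + \bigwedge C,
\]
and your route to it passes through the claim $\bigwedge\{-v_x(\varphi)+v_x(\psi)\mid wRx\} \leq \bigwedge\{-v_x(\varphi)\mid wRx\} + \bigwedge\{v_x(\psi)\mid wRx\}$. That claim is false in general, and you in fact diagnose why in your final paragraph: the diagonal family is a \emph{subset} of the full Minkowski sum, so its meet is $\geq$ the meet of the full sum, and down-distribution together with Lemma~\ref{indexed-distribution} only ever yield $\bigwedge\{\text{diagonal}\} \geq \bigwedge\{-v_x(\varphi)\mid wRx\} + \bigwedge\{v_x(\psi)\mid wRx\}$ --- a lower bound on the antecedent of K, where you need an upper bound. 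Your proposed ``resolution'' (``every diagonal element dominates the common lower bound'') re-derives exactly this useless $\geq$, so the chain never closes. Concretely, in the four-element Boolean algebra $\{0,a,-a,1\}$ with two accessible worlds where $-v_{x_1}(\varphi)=a$, $v_{x_1}(\psi)=-a$, $-v_{x_2}(\varphi)=-a$, $v_{x_2}(\psi)=a$, the diagonal meet is $1$ while $\bigwedge\{-v_x(\varphi)\mid wRx\} + \bigwedge\{v_x(\psi)\mid wRx\} = 0+0 = 0$.

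The missing idea is to apply anti-monotonicity \emph{pointwise before} taking any meets, so that down-distribution is only invoked with a singleton on one side, where the diagonal and the full sum coincide. From $\bigwedge B \leq v_x(\varphi)$ and anti-monotonicity one gets $-v_x(\varphi) \leq -\bigwedge B$, hence $-v_x(\varphi)+v_x(\psi) \leq -\bigwedge B + v_x(\psi)$ for every accessible $x$; taking meets of these pointwise-comparable families gives $v_w(\Box(\varphi\so\psi)) \leq \bigwedge\{-\bigwedge B + v_x(\psi)\mid wRx\}$, and down-distribution applied to the sets $\{-\bigwedge B\}$ and $C$ evaluates the right-hand side as $-\bigwedge B + \bigwedge C = v_w(\Box\varphi\so\Box\psi)$; implicativity of $F$ then finishes. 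This is the paper's argument. You have the right ingredients (the anti-monotonicity bound and down-distribution), but the order in which you combine them is essential, and as assembled your proof does not go through.
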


\begin{proof}
Consider any $w$ in $M$. Notably, $v_w(\Box \varphi) \leq v_{w'}(\varphi)$, for all $w R w'$. Thus, from anti-monotonicity, $- v_w(\Box \varphi) \geq -v_{w'}(\varphi)$, for all $w R w'$. Adding to both sides $v_{w'}(\psi)$, we obtain 
$$-v_w(\Box \varphi) + v_{w'}(\psi) \geq v_{w'}(\varphi \so \psi)$$
This being valid for every $w'$ accessed by $w$ give us
$$\bigwedge \{-v_w(\Box \varphi) + v_{w'}(\psi) \mid w R w'\} \geq v_w(\Box(\varphi \so \psi))$$
Because $L$ has down distribution,
$$-v_w(\Box \varphi) + \bigwedge \{v_{w'}(\psi) \mid w R w'\} \geq v_w(\Box(\varphi \so \psi))$$
$$v_w(\Box \varphi \so \Box \psi) \geq v_w(\Box(\varphi \so \psi))$$
As $L$ is implicative, $w \Vdash \Box(\varphi \so \psi) \so (\Box \varphi \so \Box \psi)$.
\end{proof}

\begin{theorem}\label{theorem-about-k-regular}
Let $L$ be anti-monotone, involutive and with down-distribution, $F$ be a set of designated values in $L$ and $x \so y$ is defined as $-x + y$ in $L$. Then $L$ is implicative if, and only if, every frame $\Fcal = \pair{W, R}$ is such that $\Fcal \Vdash_{L,F} \Box(\varphi \so \psi) \so (\Box \varphi \so \Box \psi)$.
\end{theorem}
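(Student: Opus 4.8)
The plan is to prove both directions, reusing \cref{lemma-for-K-regular} for the forward direction and adapting the counterexample construction from \cref{thlinearoutsidethefilter} for the converse. First, assume $L$ is implicative. Since $L$ is anti-monotone, involutive, and has down-distribution, \cref{lemma-for-K-regular} applies directly (with $\supset$ defined as $-x+y$), and every $w$ in every $L$-model satisfies axiom $K$; hence every frame $\Fcal$ satisfies $\Box(\varphi\so\psi)\so(\Box\varphi\so\Box\psi)$ under the designated values $F$. This direction is essentially a citation.

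For the converse, I argue by contraposition: assume $F$ is not implicative, so there exist $a,b\in L$ with $a\leq b$ but $a\supset b = -a+b \notin F$. I then want to build a frame $\Fcal$ and an $L$-model on it where axiom $K$ fails at some world. The natural attempt mirrors the earlier proofs: take worlds $w_1,w_2,w_3$ with $w_1Rw_2$ and $w_1Rw_3$, and choose valuations of $\varphi,\psi$ at $w_2,w_3$ so that $v_{w_1}(\Box(\varphi\so\psi)) \nleq v_{w_1}(\Box\varphi\so\Box\psi)$ and the latter value lands outside $F$. Concretely I would try $v_{w_2}(\varphi)=a$, $v_{w_2}(\psi)=b$, $v_{w_3}(\varphi)=b$, $v_{w_3}(\psi)=a$ (as in the theorem for distribution over disjunction), so that $v_{w_1}(\Box\varphi)=a.b=a$ (using $a\leq b$) and $v_{w_1}(\Box\psi)=a.b=a$, giving $v_{w_1}(\Box\varphi\so\Box\psi)=-a+a$. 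Meanwhile $v_{w_1}(\Box(\varphi\so\psi))=(-a+b).(-b+a)$. For the refutation I need $(-a+b).(-b+a)\nleq -a+a$ or at least that $(-a+a)\notin F$ while the implication value $(-a+b).(-b+a)\supset(-a+a)$ is not in $F$.

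The hard part will be ensuring the counterexample actually witnesses a failure of $K$ under only the stated hypotheses (anti-monotone, involutive, down-distribution) rather than requiring strict deductiveness as in \cref{thlinearoutsidethefilter}. The key leverage is that with $\supset$ defined as $-x+y$, the failure of implicativeness is itself a statement about a value lying outside $F$, so I should arrange the model so that the value $v_{w_1}(\Box(\varphi\so\psi)\so(\Box\varphi\so\Box\psi))$ literally equals (or is $\leq$, using upward closure of $F$) the witnessing value $-a+b$. A cleaner route may be to choose a single accessible world: take $w_1Rw_2$ only, with $v_{w_2}(\varphi)=a$ and $v_{w_2}(\psi)=b$; then $v_{w_1}(\Box\varphi)=a$, $v_{w_1}(\Box\psi)=b$, $v_{w_1}(\Box(\varphi\so\psi))=-a+b$, and $v_{w_1}(\Box\varphi\so\Box\psi)=-a+b$, so axiom $K$ evaluates to $(-a+b)\supset(-a+b)=-(-a+b)+(-a+b)$. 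Using involutivity this is $(a.-b)+(-a+b)$; I then need a choice of $a\leq b$ in a suitable $L$ making this value fall outside $F$ — and the assumption that $F$ is not implicative is precisely what I must convert into such a choice, possibly after first reducing to a minimal sublattice. I expect reconciling this evaluation with the non-implicativeness witness to be the only real obstacle; once the right $a,b$ and frame are pinned down, the verification is a short computation using down-distribution and involutivity.
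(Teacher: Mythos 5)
Your forward direction is exactly the paper's: cite Lemma~\ref{lemma-for-K-regular}. The converse, however, has a genuine gap, and neither of your two candidate constructions closes it. In the first (the $w_2,w_3$ construction with $v_{w_2}(\varphi)=a$, $v_{w_2}(\psi)=b$, $v_{w_3}(\varphi)=b$, $v_{w_3}(\psi)=a$), the axiom evaluates to $-\bigl[(-a+b).(-b+a)\bigr] + (-a+a)$, and you have no way to compute or bound the negation of that meet: the hypotheses give only anti-monotonicity and involutivity, not De~Morgan. In the second (single accessible world), you explicitly invoke $-(-a+b)=a.(-b)$, which is De~Morgan again and is not available; worse, even granting it, the resulting value $(a.-b)+(-a+b)$ lies \emph{above} $-a+b$, and since $F$ is upward closed, knowing $-a+b\notin F$ tells you nothing about elements above it. Your closing remark that you must "convert" non-implicativeness into a suitable choice of $a,b$, "possibly after first reducing to a minimal sublattice," is not a legitimate move either: $L$ and $F$ are fixed by the statement.

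The idea you are missing is to \emph{pre-negate} the values of $\varphi$ so that involutivity cancels the negation built into $x\so y = -x+y$ and the whole axiom collapses to exactly the witnessing value. Take $w$ accessing $w_1,w_2$ with $v_{w_1}(\varphi)=-a$, $v_{w_2}(\varphi)=-b$, $v_{w_1}(\psi)=v_{w_2}(\psi)=a$. Then $v_w(\Box\psi)=a$ and, by anti-monotonicity ($a\leq b$ gives $-b\leq -a$), $v_w(\Box\varphi)=-b$; so $v_w(\Box\varphi\so\Box\psi)=--b+a=b$ by involutivity. Likewise $v_{w_1}(\varphi\so\psi)=a$ and $v_{w_2}(\varphi\so\psi)=b$, so $v_w(\Box(\varphi\so\psi))=a$. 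The axiom then takes the value $a\so b=-a+b\notin F$, which is precisely the non-implicativeness witness, and no appeal to De~Morgan or to modifying the lattice is needed.
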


\begin{proof}
From Lemma~\ref{lemma-for-K-regular}, we obtain that every frame satisfies $K$ if $L$ is implicative.
Let us then suppose that $L$ is not implicative. Then there are $a, b \in L$ such that $a \leq b$ and $a \so b \notin F$. 

Consider $w$ accessing $w_1$ and $w_2$ such that $v_{w_1}(\varphi) = -a$, $v_{w_2}(\varphi) = -b$, $v_{w_1}(\psi) = a$, $v_{w_2}(\psi) = a$. Naturally, $v_w(\Box \psi) = a$ and, since $L$ is anti-monotone, $v_w(\Box \varphi) = -b$. Hence $v_w(\Box \varphi \so \Box \psi) = --b + a = b$ because $L$ is involutive. Moreover, we have $v_w(\Box (\varphi \so \psi)) = a$. Since $a \so b \notin F$, $\Box(\varphi \so \psi) \so (\Box \varphi \so \Box \psi)$ is not valid in $w$.

\end{proof}

Without the anti-monotone or the down-distribution properties, we have too little control over the negation, and thus Lemma~\ref{lemma-for-K-regular} will fail. But even with those properties, the converse of Theorem~\ref{theorem-about-k-regular} fails without assuming that negation is involutive. The problem in this case is that we cannot use the double negation to extract the comparison we need like $v_w(\Box \varphi \so \Box \psi) = --b + a = b$. 

\section{An example}
    
From a Boolean algebra $B = \pair{D, +, ., -}$ one may obtain the twist algebra $T$ of $B$ as suggested in \cite{vakarelov1977notes, fidel1977algebraic}. The initial use for twist algebras was to provide semantics for Nelson logic \cite{nelson1949constructible}. It has recently been used to model some paraconsistent set theories \cite{Carnielli_Coniglio_2021} and used for modal systems in \cite{ono2014modal}. The twist $T = \pair{D \times D, +, ., -}$ is defined with respect to $B$ in such a way that for $a, b, c, d \in D$
\begin{enumerate}
    \item $\pair{a, b} + \pair{c, d} = \pair{a + c, b.d}$.
    \item $\pair{a, b} . \pair{c, d} = \pair{a . c, b + d}$.
    \item $ - \pair{a, b} = \pair{b, a}$.
\end{enumerate}
The twist lattice has anti-monotone and involutive properties. In fact $--\pair{a, b} = - \pair{b, a} = \pair{a, b}$. Additionally, if $\pair{a, b} \leq \pair{c, d}$, then $a \leq c$ and $b \geq d$. So $-\pair{a, b} = \pair{b, a} \geq \pair{d, c} = -\pair{c, d}$. Note further that any restriction of $T$'s domain will still have these properties so long as the restriction is closed by the operations $\{+, ., -\}$. Hence we have, for instance, anti-monotone and involutive properties for the restriction $P = \{\pair{a, b} \in D \times D \mid a+b = 1\}$; this is precisely the restriction used in \cite{Carnielli_Coniglio_2021} accounting for various paraconsistent systems. Notably, the restriction in $P$ is such that a twist value $t + (-t)$ is always of the form $\pair{1, x}$ while $t . (-t)$ is of the form $\pair{x, 1}$. Assuming a filter $F = \{\pair{1, x} \mid x \in D\}$ gives us a natural paraconsistent semantics. In this case, we should observe that the filter that gives us the axiom $K$ is precisely the one Carnielli and Coniglio use in \cite{Carnielli_Coniglio_2021}:

\begin{theorem}
Let $T$ be the twist algebra obtained from the Boolean algebra $B = \pair{D, +, ., -}$ and $P = \{\pair{a, b} \in D \times D \mid a+b = 1\}$. Let $F$ be a set of designated values in $P$, then $F \supset \{\pair{x, y} \mid x = 1\}$ if, and only if, every $P$-frame satisfies $K$.
\end{theorem}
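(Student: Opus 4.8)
The plan is to apply Theorem~\ref{theorem-about-k-regular} to the sublattice $P$ of the twist algebra $T$, and then reduce the resulting implicativity condition to an explicit description of $F$.

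First I would check that $P$ meets the hypotheses of Theorem~\ref{theorem-about-k-regular}. The text already records that $P$ is closed under $\{+,.,-\}$, hence inherits the anti-monotone and involutive properties of $T$. Assuming (as is implicit, since the normal modal extension requires all the relevant meets to exist) that $B$ is a complete Boolean algebra, $T$ is a complete lattice with $\bigwedge_i \pair{a_i,b_i} = \pair{\bigwedge_i a_i,\; \bigvee_i b_i}$, and this value again lies in $P$ whenever all the $\pair{a_i,b_i}$ do, so $P$ is complete. Down-distribution of $P$ then follows coordinatewise from the infinite distributive laws of a complete Boolean algebra, namely $a + \bigwedge_j c_j = \bigwedge_j(a+c_j)$ and $b.\bigvee_j d_j = \bigvee_j(b.d_j)$: applying these to $\bigwedge(A+B)$ shows $\bigwedge(A+B) = \bigwedge A + \bigwedge B$. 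With all three hypotheses in place, Theorem~\ref{theorem-about-k-regular} (taking $L = P$ and $x \supset y = -x+y$ as per the standing convention) yields that every $P$-frame satisfies $K$ if and only if $F$ is implicative in $P$.

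It then remains to show that $F$ is implicative in $P$ exactly when $F \supseteq \{\pair{1,y} \mid y \in D\}$ (note that each $\pair{1,y}$ does lie in $P$, so this is the same condition as the one in the statement). For the ``if'' direction, suppose $u = \pair{a,b} \leq v = \pair{c,d}$ in $P$. Then $a \leq c$, so $b + c \geq a + b = 1$, and hence $u \supset v = -u + v = \pair{b+c,\; a.d} = \pair{1,\; a.d}$, which belongs to $\{\pair{1,y}\} \subseteq F$; thus $F$ is implicative. For the ``only if'' direction, fix an arbitrary $y \in D$: since $\pair{1,y} \in P$ and $\pair{1,y} \leq \pair{1,y}$, implicativity of $F$ forces $\pair{1,y} \supset \pair{1,y} = \pair{y,1} + \pair{1,y} = \pair{1,y} \in F$. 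As $y$ was arbitrary, $F$ contains every $\pair{1,y}$. Combining the two equivalences gives the theorem.

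The argument is largely bookkeeping; the one point needing care is the verification that $P$ — and not merely $T$ — satisfies all three hypotheses of Theorem~\ref{theorem-about-k-regular}, in particular completeness and down-distribution. These rest on completeness of the underlying Boolean algebra $B$ (which we must assume anyway for the normal modal extension $\Vdash_{P,F}$ to be defined) and on the standard infinite distributive laws; once that is settled, the characterization of implicative $F$ inside $P$ is a direct two-line computation.
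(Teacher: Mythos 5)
Your proof is correct and follows essentially the same route as the paper's: reduce the statement to Theorem~\ref{theorem-about-k-regular} and then show that implicativity of $F$ in $P$ is equivalent to $F \supseteq \{\pair{1,y} \mid y \in D\}$, via the same computation $\pair{a,b} \supset \pair{c,d} = \pair{b+c,\,a.d} = \pair{1,\,a.d}$. The only difference is that you explicitly verify the completeness and down-distribution hypotheses of that theorem for $P$ (via the infinite distributive laws of a complete Boolean algebra), which the paper's proof leaves implicit — a worthwhile addition rather than a divergence.
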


\begin{proof}
Because $P$ is anti-monotone and involutive, Theorem~\ref{theorem-about-k-regular} give us that if every frame satisfies $K$, then $P$ is implicative in $F$. Note that, for any $x \in B$, $\pair{1, x} \geq \pair{1,1}$; so $(\pair{1, 1} \supset \pair{1,x}) = \pair{1, x} \in F$. On the other hand, suppose $F \supset \{\pair{x, y} \mid x = 1\}$. If $\pair{a, b} \leq \pair{c, d}$, then $a \leq d$, $b \geq c$ and $(\pair{a, b} \supset \pair{c, d}) = \pair{c + b, a.d}$. But, from $P$'s definition, $a + b = 1$ and $c + d =1$, thus we obtain $(\pair{a, b} \supset \pair{c, d}) = \pair{1, d} \in F$. Hence $P$ is implicative with filter $F$ and every $P$-frame satisfies K.
\end{proof}

\section{Final remarks}
The topic addressed in this article is broad and we have focused on a limited segment of modal logics based on lattice semantics.
We have studied some basic modal sentences that characterize algebraic properties of lattices. Indeed, in the presence of deductive implications, linearity is strictly connected to the validity of axiom K. Alternatively, for the regular implication defined semantically as $a \subset b = - a + b$, validity of K is connected to regularity properties for negation such as involutive and anti-monotone. 

One notable advantage of restricting our analysis to lattice-based structures is that the underlying order relation offers a natural foundation for interpreting the necessity operator. As emphasized in \cite{RM24}, this becomes especially relevant when considering models in which different worlds operate according to distinct logics. A shared order relation allows for meaningful interaction across such worlds, facilitating cross-logical communication.

Finally, we have also expressed interest in investigating the conditions under which normal modal extensions exhibit the finite-model property. This feature is of particular significance for practical applications, such as the implementation of logical frameworks in computational systems.

\paragraph{Acknowledgments.}

M. Martins was partially supported by FCT – Fundação para a Ciência e a Tecnologia through project UIDB/04106/2025 at CIDMA.

\printbibliography

\newpage

\end{document}